\RequirePackage{fix-cm}
\documentclass[smallextended]{svjour3}

\usepackage{graphicx}
\RequirePackage{color}
 \usepackage[dvipsnames]{xcolor}
\usepackage{tikz}
\usepackage{booktabs}
\usepackage{tabularx}

\usepackage{amsmath}
\numberwithin{equation}{section}
\usepackage{amssymb}
\usepackage{bm}
\usepackage{mathrsfs}
\usepackage{tensor}
\usepackage{physics}
\usepackage{accents}
\usepackage{cite}
\usepackage{multirow}

\RequirePackage[
    colorlinks=true,
    urlcolor=blue,
    anchorcolor=blue,
    citecolor=blue,
    filecolor=blue,
    linkcolor=blue,
    menucolor=blue,
    linktocpage=true,
    pdfproducer=medialab,
    pdfa=true
]{hyperref}

\hypersetup{
  unicode,
  pdfauthor={Robbert W. Scholtens, Dave B. H. Verweg},
  pdftitle={Gauge invariant frames of Bianchi models in separated form},
  pdfsubject={General Relativity and Gravitation preprint},
  pdfkeywords={Bianchi cosmologies, spatial homogeneity, gauge invariance, invariant frames, Killing vector fields, gravitational waves},
  pdfproducer={LaTeX},
  pdfcreator={pdflatex}
}

\newcommand{\mc}{\mathcal}

\newcommand{\Lie}{\mathcal{L}}
\renewcommand{\dd}{\mathrm{d}}
\newcommand{\Diff}{\operatorname{Diff}}
\newcommand{\GL}{\operatorname{GL}}

\renewcommand{\qand}{\quad\text{and}\quad}

\newcommand{\cs}[3]{\tensor{\Gamma}{^{#1}_{#2}_{#3}}}
\newcommand{\cshat}[3]{\tensor{\Gamma}{^{\hat{#1}}_{\hat{#2}}_{\hat{#3}}}}
\newcommand{\ric}[2]{\tensor{R}{_{#1}_{#2}}}
\newcommand{\richat}[2]{\tensor*{R}{_{\hat{#1}}_{\hat{#2}}}}

\journalname{Journal} 

\begin{document}
\title{Gauge Invariant Frames of Bianchi Models in Separated Form}
\author{Dave B. H. Verweg\footnote{Corresponding author} \and Robbert W. Scholtens}

\institute{
    D. B. H. Verweg \at
    Kapteyn Astronomical Institute, University of Groningen, The Netherlands 
    \\
    \and
    R. W. Scholtens \at
    Kapteyn Astronomical Institute, University of Groningen, The Netherlands\\
    Bernoulli Institute for Mathematics, Computer Science and Artificial Intelligence, University of Groningen, The Netherlands 
    \\
    \email{
    \href{mailto:dave@astro.verweg.net}{dave@astro.verweg.net},
    \href{mailto:r.w.scholtens@rug.nl}{r.w.scholtens@rug.nl}
    }
}
\titlerunning{Gauge Invariant Frames of Bianchi Models}
\authorrunning{D. B. H. Verweg \and R. W. Scholtens}

\date{Received: date / Accepted: date}
\maketitle
\begin{abstract}
    Bianchi spacetimes are usually written in \emph{separated form}: metric coefficients depend only on cosmic time, while spatial dependence is carried by an invariant coframe. In this representation, geometric quantities and the Einstein equations reduce to a time-dependent system. In practice, however, observer- and gravitational-wave-adapted gauges mix time and space (e.g., from integrability-adapted to synchronous comoving coordinates). Homogeneity is preserved, but separability is no longer manifest, which complicates computation. To address this, we give a gauge-invariant reconstruction of separated frames using only the spatial Killing algebra. Given a basis $\{\xi_a\}$, we construct an invariant spatial frame $\{X_i\}$ from $[\xi_a,X_i]=0$, then dualize to $\{e^i\}$. The metric then takes the canonical form $g=\gamma_{\mu\nu}(t)\,e^\mu e^\nu$ in any gauge. We prove diffeomorphism covariance of the construction, and demonstrate computational gain in a Bianchi II example. The framework recovers canonical adapted-gauge constructions (in the sense of Jantzen) while remaining stable under arbitrary $\mathrm{Diff}(\mathcal M)$ gauge transformations, and is directly suited to gravitational-wave Bianchi models and related exact solutions of recent interest of which we give a example of the Bianchi VII type.
    \keywords{Bianchi cosmologies \and spatial homogeneity \and gauge invariance \and invariant frames \and Killing vector fields \and gravitational waves}
\end{abstract}

\section{Introduction}\label{sec:intro}

\subsection{Separated form and the classical computational gain}

Spatially homogeneous cosmologies admit a rigid symmetry structure: a three-dimensional Lie algebra
of Killing vector fields acts simply transitively on spatial hypersurfaces, and the corresponding Bianchi type is encoded in its structure constants \cite{Jantzen1979Bianchi,Taub1951,RyanShepley1975,EllisMaartensMacCallum2012,PapadopoulosGrammenos2012JMP}. This symmetry is classically exploited by writing the metric in a
\emph{separated form}
\begin{equation}\label{eq:intro:separated}
g=\gamma_{\mu\nu}(t)\,\omega^\mu\,\omega^\nu,
\end{equation}
where ${\omega^i}$ is an invariant (Maurer--Cartan) spatial coframe and $\gamma_{\mu\nu}$ depends only on cosmic time \cite{Jantzen1979Bianchi,RyanShepley1975,EllisMaartensMacCallum2012}. Equivalently, in many classical coordinate presentations this structure is encoded by allowing the coordinate components of the metric to have a prescribed spatial dependence, fixed by the Bianchi type, so that the corresponding Killing fields realize the desired Lie algebra. In an invariant coframe, the same spatial dependence is carried instead by the coframe and its structure constants, while the metric coefficients themselves are functions of time only. In this representation, Einstein's equations reduce to an ODE system, and many geometric quantities simplify accordingly \cite{Jantzen1979Bianchi,RyanShepley1975}. The central computational advantage is that the metric coefficients $\gamma_{\mu\nu}(t)$ have no spatial derivatives. Thus connection coefficients, curvature components, ADM quantities, tidal tensors, and geodesic deviation equations can be expressed in terms of time derivatives of $\gamma_{\mu\nu}(t)$ together with the algebraic structure constants of the spatial frame.

Canonical, coordinate-tied constructions of such invariant coframes, and the associated ``separation gauge'', go back to the classic analysis of Jantzen and its descendants, where one identifies each homogeneous slice with a group manifold and parametrises the invariant forms in a fixed group chart \cite{Jantzen1979Bianchi,Christodoulakis2001JMP,ChristodoulakisTerzis2006JMP}. The resulting formalism is extremely effective in the canonical gauge, precisely because the separation is built in at the level of the chosen coframe and chart. The main question of this paper is how to retain this ODE-level computational gain when the metric is presented in a gauge whose coordinate components are not adapted to this canonical separated representation. Our aim is to answer this question in a computationally explicit and practical manner, keeping the entry barrier as low as possible for specialists and for practitioners without an extensive background in differential geometry alike (see also Sect.~\ref{subsec:scope}).

\subsection{The problem: separation hidden by non-adapted gauges}

In applications, one often works in gauges that are not adapted to the canonical Bianchi presentation \cite{CoussaertHenneaux1993CQG}. Here, and throughout the paper, we use ``gauge'' in the diffeomorphism sense~\cite{trautman1980fiber,Jost2017Riemannian} (cf.\ the discussion in \cite[Sect.~3]{Verweg2024AveragingFoliations}): a gauge is a choice of representative or presentation of the spacetime metric within its diffeomorphism class, and a gauge transformation is a spacetime diffeomorphism acting on the metric and on the auxiliary structures used to represent it, such as the frame, coframe, and time function. This tension is particularly sharp in gravitational-wave settings on anisotropic backgrounds. One may begin in coordinates selected by exact integrability properties, for example Hamilton--Jacobi separability, explicit integration of geodesic deviation, or closed-form expressions for tidal observables, and then pass to synchronous coordinates comoving with a freely falling observer to extract physical observables in a laboratory frame \cite{Osetrin2006_stackel_separation,Stackel1893,OsetrinOsetrinOsetrina2022}. In Shapovalov-type and related exact-wave constructions, such transformations typically introduce explicit time--space cross terms and coordinate-dependent mixing \cite{ShapovalovBianchiIII_2306.10605,OsetrinOsetrinOsetrina2022,Osetrin2006_stackel_separation}.

Spatial homogeneity is unchanged, but coordinate-level separation is obscured: metric components acquire explicit spatial dependence, and the ODE structure that was manifest in a separated coframe becomes hidden behind PDE-heavy coordinate expressions \cite{OsetrinOsetrinOsetrina2022,ShapovalovBianchiIII_2306.10605}. This is precisely the ``privileged-to-synchronous'' paradigm emphasised in the Shapovalov gravitational-wave literature \cite{ShapovalovBianchiIII_2306.10605,OsetrinOsetrinOsetrina2022,Osetrin2006_stackel_separation}, and it is precisely this operational bottleneck, the loss of manifest ODE structure in non-adapted gauges, that motivates the present framework.

The point can be summarized schematically as follows (see Table~\ref{tab:comparison}). In a privileged or canonical gauge, the separated Bianchi structure is manifest in the chosen presentation. In a synchronous coordinate gauge, the same homogeneous geometry can look spatially dependent at the component level. Our framework restores the separated representation in the same gauge by replacing the coordinate basis with a reconstructed invariant coframe.

\begin{table}[ht]
    \centering
    \caption{Comparison of metric representations for the Bianchi VII spacetime.}
    \label{tab:comparison}
    \begin{tabularx}{\textwidth}{@{} l XXX @{}}
    \toprule
    \textbf{Feature} & \textbf{Privileged Gauge} & \textbf{Synchronous Gauge} & \textbf{Our Framework} \\
    \midrule
    \textbf{Dependence} & Purely temporal ($t$) & Mixed ($\tau, \tilde{x}^k$) & \text{Purely temporal} ($\tilde{t}$) \\
    \textbf{Complexity} & ODE manifest & PDE-coordinate form & \text{Restored ODE form} \\
    \textbf{Basis} & Coordinate $dx^\alpha$ & Coordinate $d\tilde{x}^\alpha$ & \text{Invariant coframe} $\tilde{e}^\mu$ \\
    \bottomrule
    \end{tabularx}
\end{table}

\subsection{Gauge-invariant reconstruction from the Killing algebra}

The central observation underlying the present work is that \eqref{eq:intro:separated} is a property of a \emph{coframe adapted to the homogeneous foliation}, not of a particular coordinate chart \cite{Jantzen1979Bianchi,RyanShepley1975,EllisMaartensMacCallum2012,Lee2010}. Diffeomorphisms can hide separation in components, but they do not remove the existence of an adapted separated coframe. Accordingly, the right strategy after a gauge move is not to re-identify a canonical group chart and redo a coordinate-tied construction, but to reconstruct an invariant separated coframe directly from invariant data, namely from the spatial Killing algebra itself \cite{Jantzen1979Bianchi,Christodoulakis2001JMP,PapadopoulosGrammenos2012JMP}.

Concretely, given a Killing triple $\{\xi_i\}$ generating the Bianchi homogeneity subgroup, one reconstructs an invariant spatial frame $\{X_j\}$ by solving
\begin{equation}\label{eq:intro:commuting}
    [\xi_i,X_j]=0,
    \qquad i,j=1,2,3,
\end{equation}
and then takes the dual coframe $\{e^\mu\}$ to obtain a separated representation
\begin{equation}\label{eq:intro:reconstructed}
    g=\gamma_{\mu\nu}(t)\,e^\mu e^\nu
\end{equation}
in the \emph{same gauge} in which the metric is presented.

This reconstruction principle, together with the proof that separation is stable under diffeomorphisms, forms the core of our gauge-invariant framework \cite{Jantzen1979Bianchi,Christodoulakis2001JMP,Lee2010,CoussaertHenneaux1993CQG}.

\subsection{Advantage and scope of the paper}\label{subsec:scope}
The purpose of this paper is therefore to provide a gauge-invariant reconstruction of separated frames for Bianchi spacetimes from the spatial Killing algebra. In the canonical gauge, our construction is equivalent to the classical coordinate-tied methods~\cite{Jantzen1979Bianchi,Christodoulakis2001JMP}. Its advantage is that it remains valid in arbitrary gauges, including those that arise naturally in gravitational-wave applications where one moves between privileged integrable coordinates and synchronous laboratory frames~\cite{ShapovalovBianchiIII_2306.10605,OsetrinOsetrinOsetrina2022,Osetrin2006_stackel_separation}.

This advantage is not only formal, but computational. In the reconstructed coframe, the metric coefficients are $\gamma_{\mu\nu}(t)$ and therefore have no spatial derivatives. In a general non-coordinate frame, the Levi--Civita connection is computed from the Koszul formula, so the only derivatives of the metric coefficients are ordinary time derivatives along $X_0=\partial_t$. All remaining spatial dependence enters algebraically through the time-independent structure functions of the reconstructed frame. Consequently, the connection, curvature components, ADM quantities, tidal tensors, and geodesic deviation equations are built from $\gamma_{\mu\nu}(t)$, its time derivatives, and the Bianchi structure coefficients. This restores the ODE-level computational structure familiar from canonical Bianchi gauges, even when the coordinate components of the metric in a synchronous or observer-adapted gauge are manifestly non-separated. The gain is anticipated schematically in Table~\ref{tab:comparison} and demonstrated explicitly in the examples.

There is a corresponding trade-off. The reconstructed frame is generally a point-dependent, non-coordinate frame, so its individual components need not have the same immediate coordinate or observer interpretation as the components in the original chart. We therefore do not propose to replace physically motivated coordinates. Rather, the framework separates two roles: the chosen gauge may remain the natural one for interpretation, while the reconstructed invariant frame supplies the efficient representation for geometric computation.

The examples are chosen to make this point immediate. Section~\ref{sec:examples} first gives a Bianchi II example with an explicit time--space cross term and shows, in a controlled setting, how the reconstructed invariant coframe restores separation and reduces the apparent coordinate complexity \cite{Valent2009GRG}. A second gravitational-wave motivated example, drawn from a Shapovalov-type construction with Bianchi VII homogeneity and its privileged-to-synchronous transformation, shows the stronger physical implication: one may keep the synchronous observer-adapted gauge while recovering the separated ODE character of the homogeneous sector in the reconstructed frame \cite{ShapovalovBianchiIII_2306.10605,OsetrinOsetrinOsetrina2022,Osetrin2006_stackel_separation}. For related motivations from anisotropic perturbation and wave studies on cosmological backgrounds, see also \cite{GuhaDatta2020IJMPD,BianchiI_GW_2205.05733,LudwickWilliams2025PhLB}.

The paper is organized as follows. Section~\ref{sec:sepform} formulates separated form in coordinate-free terms and proves diffeomorphism stability \cite{Lee2010}. Section~\ref{sec:reconstruction} presents the Killing-algebra reconstruction and establishes its naturality \cite{PapadopoulosGrammenos2012JMP,Scholtens2024Bianchi}. Section~\ref{sec:examples} gives the Bianchi II model with an explicit time--space cross term \cite{Valent2009GRG}. The gravitational-wave Bianchi VII example then demonstrates the same reconstruction in a privileged-to-synchronous setting \cite{ShapovalovBianchiIII_2306.10605,OsetrinOsetrinOsetrina2022,Osetrin2006_stackel_separation}. Section~\ref{sec:conclusion} concludes.

We use signature $(-,+,+,+)$. Greek indices $\mu,\nu,\dots$ run over $0,1,2,3$, and Latin indices $i,j,\dots$
over $1,2,3$. A \emph{frame} ${X_\mu}$ denotes a basis of vector fields and a \emph{coframe} ${e^\mu}$ its
dual basis, $e^\mu(X_\nu)=\delta^\mu{}_\nu$. A \emph{gauge transformation} is a diffeomorphism
$f\in\mathrm{Diff}(\mathcal M)$ acting by pushforward/pullback; we track the transformation of the time
function explicitly to maintain a foliation-adapted geometric description \cite{Lee2010,Verweg2024AveragingFoliations}.

\section{Separated form and its gauge invariance}\label{sec:sepform}

\subsection{Separated form as a geometric definition}\label{subsec:separatedform}

Let $(\mc{M},g)$ be a simply-connected Lorentzian $4$-manifold. Fix a smooth frame $\{X_\mu\}_{\mu=0}^3$ and we denote its coframe by $\{e^\mu\}_{\mu=0}^3$, not assumed to be coordinate-induced. Our aim is to isolate, in a coordinate-free manner, the precise structural input that underlies the familiar ODE reduction of spatially homogeneous dynamics in a
left-invariant basis \cite{Jantzen1979Bianchi,RyanShepley1975,Lee2010}.

\begin{definition}\label{def:separated}
    We say that $g$ is in \textbf{separated form} with respect to $(X_\mu,e^\mu)$ if there exists a smooth time function
    $t:\mc{M}\to\mathbb{R}$ such that
    \begin{equation}\label{eq:sep:def:e0}
    e^0=\dd t \qand X_0=\partial_t,
    \end{equation}
    and the metric is decomposed as
    \begin{equation}\label{eq:sep:def}
    g=\gamma_{\mu\nu}(t)\,e^\mu e^\nu,
    \end{equation}
    where the coefficients $\gamma_{\mu\nu}$ depend only on $t$, together with the coframe invariance condition
    \begin{equation}\label{eq:sep:Lie}
    \Lie_{X_0} e^\mu = 0,\qquad \mu=0,1,2,3.
    \end{equation}
\end{definition}

\vspace{3mm}
\noindent
In cosmology, spacetimes are standardly assumed to be globally hyperbolic. For such spacetimes, we are thus always able to construct a cosmic time $t$, and hence any such spacetime metric can be written in a separated form. In general, besides standard cosmology, a spacetime metric can be written into a separated form if its topology allows the existence of a global cosmic time function $t$.

Definition~\ref{def:separated} packages the separation property as a statement about an \emph{adapted coframe}:
all time dependence resides in $\gamma_{\mu\nu}(t)$, while all spatial structure is carried by the one-forms
$e^i$ and their (generally nontrivial) commutation relations. In particular, introduce the structure functions
$C^\alpha{}_{\mu\nu}$ of the frame by
\begin{equation}\label{eq:sep:structurefunctions}
[X_\mu,X_\nu]=C^\alpha{}_{\mu\nu}\,X_\alpha,
\qquad
\dd e^\alpha=-\frac12\,C^\alpha{}_{\mu\nu}\,e^\mu\wedge e^\nu.
\end{equation}
Then \eqref{eq:sep:Lie} implies that the structure functions are time-independent:
\begin{equation}\label{eq:sep:C_timeindep}
0=\Lie_{X_0}(\dd e^\alpha)=\dd(\Lie_{X_0}e^\alpha)
\;\;\Longrightarrow\;\;
\Lie_{X_0}C^\alpha{}_{\mu\nu}=0.
\end{equation}
Consequently, any spatial dependence enters only through $C^\alpha{}_{\mu\nu}(x)$ (equivalently through the
noncommutativity of the chosen frame), and not through spatial derivatives of the metric coefficients
$\gamma_{\mu\nu}$. This is the invariant content behind the standard statement that, in a symmetry-adapted
basis, the homogeneous Einstein system closes as an ODE system \cite{Jantzen1979Bianchi,RyanShepley1975,Lee2010,EllisMaartensMacCallum2012}.

\subsection{Diffeomorphism stability}

Separated form is formulated entirely in terms of \textit{tensorial} data: a metric $g$, a time function, and an adapted frame--coframe pair. It is therefore expected to behave naturally under gauge transformations, i.e.\ diffeomorphisms. The following proposition makes this precise. Conceptually, it explains why coordinate-level loss of separation under a gauge move is not an obstruction -- the separated structure is transported covariantly once one transforms both the coframe and the time function.

\begin{proposition}[Diffeomorphism stability]\label{prop:diffeostable}
    Let $f:\mc{M}\to\mc{N}$ be a diffeomorphism and assume that $g$ is in separated form on $\mc{M}$ with respect to $(X_\mu,e^\mu)$ and time function $t$. Define the pushed-forward time function
    \begin{equation}\label{eq:sep:push_time}
    \tilde t:=t\circ f^{-1}.
    \end{equation}
    Then the pushforward metric $f_*g$ is in separated form on $\mc{N}$ with respect to the pushed-forward frame--coframe pair $(f_*X_\mu,f_*e^\mu)$ and time function $\tilde t$.
\end{proposition}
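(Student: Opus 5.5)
The plan is to verify the three defining conditions of Definition~\ref{def:separated} for $(f_*g, f_*X_\mu, f_*e^\mu, \tilde t)$. Each is a tensorial identity, and a diffeomorphism transports tensorial identities. Throughout, $f_*$ acts on covariant tensors as $(f^{-1})^*$ and on vector fields by pushforward, so that $f_*(\alpha(Y))=(f_*\alpha)(f_*Y)$ and $f_*(h\,\alpha\otimes\beta)=(h\circ f^{-1})\,f_*\alpha\otimes f_*\beta$ for functions $h$.

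The first step is to check that $\tilde t$ is a time function and that \eqref{eq:sep:def:e0} holds. Naturality of the exterior derivative gives $f_*e^0=(f^{-1})^*\dd t=\dd(t\circ f^{-1})=\dd\tilde t$. Smoothness of $\tilde t$ is immediate. Its gradient is timelike with respect to $f_*g$ because $(f_*g)^{-1}(\dd\tilde t,\dd\tilde t)=g^{-1}(\dd t,\dd t)\circ f^{-1}<0$. Duality is preserved, since $(f_*e^\mu)(f_*X_\nu)=e^\mu(X_\nu)\circ f^{-1}=\delta^\mu{}_\nu$, so the pushed-forward pair is again a frame–coframe pair. Consequently $f_*X_0$ is the vector field dual to $\dd\tilde t$ in this frame, which is exactly what $\partial_{\tilde t}$ means in Definition~\ref{def:separated}. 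For the same reason, $X_0=\partial_t$ there is read as ``$X_0$ is the frame vector dual to $\dd t$''; equivalently, $f_*X_0(\tilde t)=X_0(t)\circ f^{-1}=1$.

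The second step is the metric decomposition. Applying $f_*$ to $g=\gamma_{\mu\nu}(t)\,e^\mu e^\nu$ gives $f_*g=(\gamma_{\mu\nu}\circ t\circ f^{-1})\,f_*e^\mu\, f_*e^\nu=\gamma_{\mu\nu}(\tilde t)\,f_*e^\mu f_*e^\nu$. This is the key point of the argument: the coefficient functions are unchanged as functions of one real variable, and they are merely composed with the new time function $\tilde t$. Hence they depend only on $\tilde t$.

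The third step is the Lie-derivative condition \eqref{eq:sep:Lie}. I would invoke the naturality of the Lie derivative, $\Lie_{f_*Y}(f_*\alpha)=f_*(\Lie_Y\alpha)$. This follows because the flow of $f_*Y$ is $f\circ\phi_s\circ f^{-1}$ when $\phi_s$ is the flow of $Y$; alternatively, use Cartan's formula together with $f_*$ commuting with $\dd$ and with interior products. It yields $\Lie_{f_*X_0}f_*e^\mu=f_*(\Lie_{X_0}e^\mu)=0$.

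The main obstacle is not computational but interpretive: making precise that ``$X_0=\partial_t$'' in a non-coordinate setting means the frame vector dual to $\dd t$, so that it transports correctly. Once that reading is fixed, the rest is routine naturality. As a remark, the structure functions transform as $\tilde C^\alpha{}_{\mu\nu}=C^\alpha{}_{\mu\nu}\circ f^{-1}$, because $[f_*X_\mu,f_*X_\nu]=f_*[X_\mu,X_\nu]$. They therefore remain $\tilde t$-independent, consistent with \eqref{eq:sep:C_timeindep}.
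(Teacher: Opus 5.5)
Your proposal is correct and follows essentially the same route as the paper: duality of $(f_*X_\mu,f_*e^\mu)$, $f_*e^0=\dd\tilde t$, naturality of the Lie derivative, and transport of the coefficients to $\gamma_{\mu\nu}(\tilde t)$. The differences are minor: you push forward the tensor expression where the paper evaluates $f_*g$ on the pushed-forward frame. Your added checks (the timelike gradient of $\tilde t$, and reading $X_0=\partial_t$ as the frame vector dual to $\dd t$) make explicit points the paper leaves implicit.
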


\vspace{2mm}
\noindent
In gravitational-wave Bianchi constructions it is common to work first in privileged (integrability-adapted) coordinates and later transform to synchronous coordinates adapted to a freely falling observer
\cite{OsetrinOsetrinOsetrina2022,ShapovalovBianchiIII_2306.10605,Osetrin2006_stackel_separation}. Proposition~\ref{prop:diffeostable} isolates the geometric
content: separated form persists at the level of an adapted coframe, provided the time function is transformed
consistently with the diffeomorphism. This is precisely the mechanism by which a gauge transformation may
obscure separation in coordinate components while leaving the underlying separated structure intact.

\subsubsection{Proof of diffeomorphism stability}
First, observe
\begin{equation}
(f_*e^\mu)(f_*X_\nu)=e^\mu(X_\nu)=\delta^\mu{}_\nu, 
\end{equation}
so $(f_*X_\mu,f_*e^\mu)$ is dual. Since $e^0=\dd t$, we have $f_*e^0=\dd(t\circ f^{-1})=\dd\tilde t$. Naturality of the Lie derivative gives $\Lie_{f_*X_0}(f_*e^\mu)=f_*(\Lie_{X_0}e^\mu)=0$. Finally, for $p\in\mc{N}$, \begin{equation} (f_*g)_p(f_*X_\mu,f_*X_\nu)=g_{f^{-1}(p)}(X_\mu,X_\nu)=\gamma_{\mu\nu}(t(f^{-1}(p)))=\gamma_{\mu\nu}(\tilde t(p)), \end{equation} which is the separated representation.

\section{Reconstruction of invariant frames from the Killing algebra}
\label{sec:reconstruction}


\subsection{Set-up and physical idea}

Section~\ref{sec:intro} identified the main limitation of the standard Bianchi
construction: it is most efficient when one has already chosen coordinates adapted to
the homogeneous spatial symmetry. In such a canonical gauge, the invariant
Maurer--Cartan coframe is built into the coordinate description, and the metric takes
separated form from the start. After a general gauge transformation, however, and in
particular after a transformation to synchronous observer coordinates, this adapted
coordinate description may be lost.

The purpose of this section is to show that the adapted coframe itself is not lost. It
can be reconstructed directly from the spatial Killing algebra. This is the key
advantage of the framework. We do not reconstruct a preferred coordinate system, and
we do not reparametrise the spatial group manifold. Instead, we reconstruct the
invariant spatial frame whose dual coframe carries the Bianchi structure. Once this
frame is found, the metric coefficients are again functions of the foliation time only,
as in the separated form of Section~\ref{sec:sepform}. Related constructions of
invariant frames from Killing data appear in
\cite{PapadopoulosGrammenos2012JMP}, and in the algorithmic construction used in~\cite{Scholtens2024Bianchi}.

Let $(\mathcal M,g)$ be a spatially homogeneous spacetime. Let $\mathcal K$ be the
three-dimensional Lie algebra of spatial Killing fields spanning the tangent spaces of
the homogeneous hypersurfaces. Fix a basis $\{\xi_a\}_{a=1}^3\subset\mathcal K$
satisfying
\begin{equation}
\label{eq:xi:structure}
[\xi_a,\xi_b]=C^c{}_{ab}\,\xi_c,
\end{equation}
where the constants $C^c{}_{ab}$ encode the Bianchi type
\cite{Jantzen1979Bianchi,RyanShepley1975,EllisMaartensMacCallum2012}. The object
to reconstruct is not a coordinate chart, but a \textit{spatial frame} which is
constant along the group orbits.

\begin{definition}
\label{def:invframe}
A spatial frame $\{X_i\}_{i=1}^3$ is \textbf{invariant} with respect to the Killing
basis $\{\xi_a\}$ if
\begin{equation}
\label{eq:inv:commute}
[\xi_a,X_i]=0,
\qquad a,i=1,2,3.
\end{equation}
\end{definition}

\noindent
The condition \eqref{eq:inv:commute} is the coordinate-free replacement for using
left- or right-invariant frames in a canonical Bianchi chart. It says that the frame
fields $X_i$ are transported unchanged along the spatial isometry orbits. This is
precisely the structure that remains meaningful after a gauge transformation, even
when the coordinate components of the metric are no longer separated
\cite{Jantzen1979Bianchi,Christodoulakis2001JMP,ChristodoulakisTerzis2006JMP,
PapadopoulosGrammenos2012JMP}.

\subsection{Invariant-frame reconstruction}

The first step is purely algebraic and differential: from the Killing basis
$\{\xi_a\}$ we reconstruct a spatial frame satisfying \eqref{eq:inv:commute}. This is
the part of the construction that replaces the choice of a canonical group chart.

\begin{theorem}[Existence and uniqueness of the invariant frame]
\label{thm:invframe:existence}
Let $(\mathcal M,g)$ be spatially homogeneous, with spatial Killing basis
$\{\xi_a\}_{a=1}^3$ satisfying \eqref{eq:xi:structure}. On each homogeneous
hypersurface, there exists locally a spatial frame $\{X_i\}_{i=1}^3$ satisfying
\begin{equation}
\label{eq:invframe:theorem}
[\xi_a,X_i]=0.
\end{equation}
After fixing a reference point $p$ on the hypersurface and a normalisation
$X_i(p)=M_i{}^a\,\xi_a(p)$ with $M\in\GL(3)$, the local solution is unique.
Under the usual global assumptions for homogeneous Bianchi slices, for instance when
the homogeneous hypersurface is simply connected and the Killing action is complete,
the construction extends globally on the hypersurface.
\end{theorem}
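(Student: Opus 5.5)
The plan is to use the simply transitive action to identify the hypersurface with a group, and then build the invariant frame as the "opposite-side" invariant fields. First I restrict to a homogeneous hypersurface $\Sigma$. Because the $\xi_a$ are pointwise linearly independent and span $T\Sigma$, they form a local frame on $\Sigma$. I then expand an arbitrary spatial field as $X=f^b\xi_b$ with unknown functions $f^b$. Using \eqref{eq:xi:structure}, the condition \eqref{eq:invframe:theorem} becomes
\begin{equation*}
[\xi_a,X]=\big(\xi_a(f^c)+C^c{}_{ab}f^b\big)\xi_c=0,
\end{equation*}
which is the first-order linear system
\begin{equation*}
\xi_a(f^c)=-C^c{}_{ab}\,f^b .
\end{equation*}
Since the $\xi_a$ form a frame, this system prescribes the full differential of each $f^c$. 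It is therefore a total differential (Frobenius-type) system, and I treat it as a flat connection problem on the trivial bundle $\Sigma\times\mathbb{R}^3$.

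The key step is checking integrability. The natural way is to compute $[\xi_a,\xi_b](f)$ in two ways. On one side, $[\xi_a,\xi_b](f)=C^d{}_{ab}\,\xi_d(f)$. On the other, apply $\xi_a\xi_b-\xi_b\xi_a$ to $f$ using the system twice. The resulting matrices are $\mathrm{ad}$-matrices, and the compatibility condition reduces exactly to the Jacobi identity for $C^c{}_{ab}$, i.e.\ to the statement that $C\mapsto \mathrm{ad}$ is a representation.

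Once integrability holds, the Frobenius theorem (equivalently, existence of parallel sections of a flat connection) gives the following on a simply connected neighbourhood of $p$. For each initial value $f^b(p)=M_i{}^b$ there is a unique solution $f^b_i$. Setting $X_i=f^b_i\xi_b$ gives three solutions. Linearity of the system keeps the solution matrix $F(x)$ invertible: it satisfies a linear ODE along any curve, so $\det F$ evolves by Liouville's formula and never vanishes. Hence $\{X_i\}$ is a frame. Uniqueness follows from the same argument applied to the difference of two solutions, which vanishes at $p$.

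For the global statement I plan to use the group picture. Completeness and simple transitivity identify $\Sigma$ with a simply connected Lie group $G$, and under this identification the $\xi_a$ become (say) right-invariant fields. The left-invariant fields $X_i$ with $X_i(e)=M_i{}^a\xi_a(e)$ then commute with all right-invariant fields. This gives a global solution, which agrees with the local one by uniqueness. Alternatively, simple connectivity alone lets the flat-connection solution be continued along paths with no monodromy.

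I expect the main obstacle to be the integrability computation, specifically getting sign and index conventions right so that the curvature of the connection vanishes by Jacobi. A secondary point is justifying the global identification of $\Sigma$ with $G$, for which I will rely on the stated completeness and simple-connectivity assumptions.
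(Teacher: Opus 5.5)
Your proposal is correct and follows essentially the same route as the paper. Both expand $X_i$ in the Killing basis, reduce $[\xi_a,X_i]=0$ to the linear system $\xi_a(X_i{}^b)=-C^b{}_{ac}X_i{}^c$, check Frobenius compatibility via the Jacobi identity (the adjoint representation), and use simple connectivity for a path-independent global continuation. Your Liouville argument that the solution matrix stays invertible, and your left/right-invariant group picture for the global step, are useful additions that make explicit what the paper leaves implicit.
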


\begin{proof}
Write the desired invariant frame as
\begin{equation}
\label{eq:Xi_expand}
X_i=X_i{}^b\,\xi_b .
\end{equation}
Using \eqref{eq:xi:structure}, the condition $[\xi_a,X_i]=0$ becomes
\begin{equation}
\label{eq:reconstruct:linearPDE}
\xi_a\!\left(X_i{}^b\right)
=
-C^b{}_{ac}X_i{}^c,
\qquad a,b,i=1,2,3.
\end{equation}
Indeed,
\begin{equation}
[\xi_a,X_i]
=
\left(\xi_a(X_i{}^b)+X_i{}^cC^b{}_{ac}\right)\xi_b .
\end{equation}
Equation \eqref{eq:reconstruct:linearPDE} is the reconstruction equation.

This is a linear first-order system along the homogeneous orbit. To see the local
existence explicitly, restrict \eqref{eq:reconstruct:linearPDE} to a smooth curve
$x(s)$ tangent to the orbit,
\begin{equation}
\dot x(s)=F^a(s)\xi_a\big|_{x(s)} .
\end{equation}
Then \eqref{eq:reconstruct:linearPDE} becomes the linear matrix ODE
\begin{equation}
\label{eq:invframe_flow}
\frac{d}{ds}X_i{}^b(s)
=
-F^a(s)C^b{}_{ac}X_i{}^c(s).
\end{equation}
Since the coefficients are smooth, local existence and uniqueness along the curve
follow from standard ODE theory \cite{Lee2010}.

The system is compatible on the orbit because the matrices
$(\mathbf C_a)^b{}_{c}:=C^b{}_{ac}$ furnish the adjoint representation of the
Killing algebra. Equivalently, if
\begin{equation}
D_aV^b:=\xi_a(V^b)+C^b{}_{ac}V^c ,
\end{equation}
then \eqref{eq:reconstruct:linearPDE} is $D_aX_i{}^b=0$, and the Jacobi identity
implies
\begin{equation}
[D_a,D_b]=C^c{}_{ab}D_c .
\end{equation}
Thus the equations are Frobenius-compatible. Consequently the local solution can be
continued consistently on the homogeneous hypersurface. With a fixed normalisation
\begin{equation}
\label{eq:init_cond}
X_i(p)=M_i{}^a\,\xi_a(p),
\qquad M\in\GL(3),
\end{equation}
the linear system has a unique local solution. If the homogeneous hypersurface is
simply connected and the Killing action is complete, this continuation is path
independent and gives a global invariant frame on the hypersurface.
\end{proof}

\subsection{Generation of separated form}

The next step is geometric. Once the invariant frame has been reconstructed, its dual
coframe gives the separated representation of the metric.

\begin{proposition}[Separated form from the invariant frame]
\label{prop:separated:from:frame}
Let $t$ be a foliation time whose level sets are homogeneous hypersurfaces, and choose
$X_0=\partial_t$ so that
\begin{equation}
[X_0,\xi_a]=0 .
\end{equation}
Let $\{X_i\}$ be the invariant spatial frame from
Theorem~\ref{thm:invframe:existence}, and let $\{e^\mu\}$ be the coframe dual to
$\{X_\mu\}=\{X_0,X_i\}$. Then the metric takes the separated form
\begin{equation}
\label{eq:reconstruct:sep}
g=\gamma_{\mu\nu}(t)\,e^\mu\otimes e^\nu,
\qquad
\gamma_{\mu\nu}(t):=g(X_\mu,X_\nu).
\end{equation}
If the spatial frame is propagated along the foliation time by
\begin{equation}
[X_0,X_i]=0 ,
\end{equation}
then also
\begin{equation}
\Lie_{X_0}e^\mu=0 .
\end{equation}
\end{proposition}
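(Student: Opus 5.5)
The plan is to show that each coefficient $\gamma_{\mu\nu}:=g(X_\mu,X_\nu)$ is constant along the homogeneous hypersurfaces. Since these are the level sets of $t$, this makes each $\gamma_{\mu\nu}$ a function of $t$ alone. The expansion $g=\gamma_{\mu\nu}e^\mu\otimes e^\nu$ itself is immediate from duality: evaluating both sides on $(X_\alpha,X_\beta)$ gives $g(X_\alpha,X_\beta)$. So all the content lies in the spatial constancy of the coefficients, together with the final Lie-derivative statement.

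For spatial constancy we differentiate along the Killing fields, which span the tangent spaces of each slice. For any $a$, the Leibniz rule for the Lie derivative gives
\[
\xi_a\big(g(X_\mu,X_\nu)\big)=(\Lie_{\xi_a}g)(X_\mu,X_\nu)+g([\xi_a,X_\mu],X_\nu)+g(X_\mu,[\xi_a,X_\nu]).
\]
The first term vanishes because $\xi_a$ is Killing. The brackets $[\xi_a,X_i]$ vanish by Theorem~\ref{thm:invframe:existence}, and $[\xi_a,X_0]=0$ by the hypothesis on $X_0$. Hence $\xi_a(\gamma_{\mu\nu})=0$ for all $a$. Since the $\xi_a$ span the tangent space of each slice, $\gamma_{\mu\nu}$ is constant on the connected hypersurfaces and therefore depends only on $t$. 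One point needs care here: the invariant frame was built slice by slice, so it must be extended to a frame on $\mathcal M$, for instance by choosing the normalisation $M$ smoothly in $t$ along a transversal curve of reference points. The identity above only needs the frame defined on an open set of $\mathcal M$, so this extension has to be stated explicitly.

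For the last claim, assume $[X_0,X_\nu]=0$ for all $\nu$; the case $\nu=0$ is trivial. Then for each $\mu,\nu$,
\[
(\Lie_{X_0}e^\mu)(X_\nu)=X_0\big(e^\mu(X_\nu)\big)-e^\mu([X_0,X_\nu])=0-0=0.
\]
Since the $X_\nu$ form a frame, $\Lie_{X_0}e^\mu=0$. Combined with $e^0=\dd t$, which holds once the spatial frame is tangent to the slices so that $e^0$ annihilates $X_i$ and $e^0(\partial_t)=1$, this gives every condition of Definition~\ref{def:separated}.

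I expect the main obstacle to be the compatibility of $X_0=\partial_t$ with the frame. The identity $e^0=\dd t$ requires $X_0$ to be dual to $\dd t$ with respect to the chosen spatial frame. Moreover, the propagation condition $[X_0,X_i]=0$ has to be consistent with $[\xi_a,X_i]=0$. I would settle this by noting that the flow of $X_0$ commutes with the flows of the $\xi_a$, since $[X_0,\xi_a]=0$. Flowing a slice frame along $X_0$ therefore preserves invariance, because the Jacobi identity gives $[\xi_a,[X_0,X_i]]=0$ and the propagated frame satisfies the same Killing-invariance equations. Everything else is the routine Leibniz computation above.
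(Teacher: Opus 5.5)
Your proposal is correct and follows the paper's proof essentially verbatim. The Leibniz identity for $\xi_a\big(g(X_\mu,X_\nu)\big)$, combined with the Killing property, $[\xi_a,X_i]=0$ and $[\xi_a,X_0]=0$, gives $\xi_a(\gamma_{\mu\nu})=0$; duality with $[X_0,X_\nu]=0$ then gives $\Lie_{X_0}e^\mu=0$. Your extra remarks (extending the slice-wise frame to $\mathcal M$, checking $e^0=\dd t$, and showing that propagation preserves invariance) cover points the paper leaves implicit. For the last one, your Jacobi argument as stated already assumes invariance off the initial slice; the clean version is that $Y:=[\xi_a,X_i]$ satisfies $[X_0,Y]=[[X_0,\xi_a],X_i]+[\xi_a,[X_0,X_i]]=0$ and vanishes on the initial slice, hence vanishes everywhere.
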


\begin{proof}
Define
\begin{equation}
\label{eq:gamma_def}
\gamma_{\mu\nu}:=g(X_\mu,X_\nu).
\end{equation}
For each spatial Killing field $\xi_a$,
\begin{equation}
\xi_a\big(g(X_\mu,X_\nu)\big)
=
(\Lie_{\xi_a}g)(X_\mu,X_\nu)
+
g([\xi_a,X_\mu],X_\nu)
+
g(X_\mu,[\xi_a,X_\nu]).
\end{equation}
The first term vanishes because $\xi_a$ is Killing. The remaining two terms vanish
because $[\xi_a,X_i]=0$ by construction and $[\xi_a,X_0]=0$ by the choice of
foliation-adapted time. Therefore
\begin{equation}
\xi_a(\gamma_{\mu\nu})=0 .
\end{equation}
Since the $\xi_a$ span the spatial tangent directions on each homogeneous
hypersurface, $\gamma_{\mu\nu}$ is constant on each such hypersurface. Hence
\begin{equation}
\gamma_{\mu\nu}=\gamma_{\mu\nu}(t),
\end{equation}
and, by duality of the frame and coframe,
\begin{equation}
g=\gamma_{\mu\nu}(t)\,e^\mu\otimes e^\nu .
\end{equation}

Finally, if the spatial frame is propagated along the foliation time so that
$[X_0,X_i]=0$, then $\Lie_{X_0}e^\mu=0$ follows from duality:
\begin{equation}
0
=
\Lie_{X_0}\big(e^\mu(X_\nu)\big)
=
(\Lie_{X_0}e^\mu)(X_\nu)
+
e^\mu([X_0,X_\nu]).
\end{equation}
Since this holds for every $X_\nu$, we obtain $\Lie_{X_0}e^\mu=0$.
\end{proof}

\subsection{Naturality and type preservation}

The final step records the properties that make the construction useful in non-adapted
gauges. The reconstruction preserves the Bianchi type, behaves naturally under
diffeomorphisms, and is equivariant under changes of Killing basis.

\begin{proposition}[Structural properties of the reconstruction]
\label{prop:reconstruction:properties}
The invariant-frame reconstruction has the following properties.
\begin{enumerate}
\item \textbf{Type preservation.} The reconstructed spatial frame closes on a Lie
algebra isomorphic to the original Killing algebra. In the normalisation
$X_i(p)=\xi_i(p)$, it carries the opposite structure constants,
\begin{equation}
[X_i,X_j]=-C^k{}_{ij}X_k
\end{equation}
at the reference point, and hence the same Bianchi type.

\item \textbf{Diffeomorphism covariance.} If $f\in\Diff(\mathcal M)$, then the
reconstruction applied to $(f_*g,f_*\xi_a)$ yields the pushed-forward frame
$f_*X_i$ and coframe $f_*e^i$, with the pushed-forward normalisation.

\item \textbf{General linear equivariance.} If the Killing basis is changed by
$\xi'_a=A_a{}^b\xi_b$, with constant $A\in\GL(3)$, then the reconstructed frame
may be chosen as
\begin{equation}
X'_i=(A^{-1})^b{}_{i}X_b .
\end{equation}
\end{enumerate}
\end{proposition}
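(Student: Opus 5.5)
We treat the three items separately, since each rests on a different mechanism; all of them use only the reconstruction equation \eqref{eq:reconstruct:linearPDE} and the uniqueness part of Theorem~\ref{thm:invframe:existence}.

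\emph{Type preservation.} The key identity is the Jacobi identity for vector fields. For any $a,i,j$,
\begin{equation*}
[\xi_a,[X_i,X_j]]=[[\xi_a,X_i],X_j]+[X_i,[\xi_a,X_j]]=0 .
\end{equation*}
Hence $[X_i,X_j]$ is itself invariant. Expand it as $[X_i,X_j]=c^k{}_{ij}X_k$, with functions $c^k{}_{ij}$, using that $\{X_k\}$ is a frame. Then
\begin{equation*}
0=[\xi_a,c^k{}_{ij}X_k]=\xi_a(c^k{}_{ij})X_k .
\end{equation*}
So the $c^k{}_{ij}$ are constant on the hypersurface, since the $\xi_a$ span it, and the $X_i$ close on a Lie algebra.

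To identify the constants we evaluate at $p$ in the normalisation $X_i(p)=\xi_i(p)$, writing $X_i=X_i{}^b\xi_b$ with $X_i{}^b(p)=\delta_i^b$. We expand
\begin{equation*}
[X_i,X_j]=X_i{}^aX_j{}^b[\xi_a,\xi_b]+X_i(X_j{}^b)\xi_b-X_j(X_i{}^b)\xi_b .
\end{equation*}
At $p$ the first term gives $C^k{}_{ij}\xi_k$. By \eqref{eq:reconstruct:linearPDE}, $X_i(X_j{}^b)=X_i{}^a\xi_a(X_j{}^b)$, which at $p$ equals $-C^b{}_{ij}$; likewise $X_j(X_i{}^b)$ equals $-C^b{}_{ji}=C^b{}_{ij}$. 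The total is therefore
\begin{equation*}
C^k{}_{ij}-2C^k{}_{ij}=-C^k{}_{ij},
\end{equation*}
so $c^k{}_{ij}=-C^k{}_{ij}$. Finally, $\{-C^k{}_{ij}\}$ defines a Lie algebra isomorphic to the original one via $v\mapsto -v$, so the Bianchi type is unchanged. This bookkeeping is the main place where an error could creep in, so I would double-check the signs in this step.

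\emph{Diffeomorphism covariance.} This follows from naturality of the bracket, $f_*[Y,Z]=[f_*Y,f_*Z]$. First, $f_*\xi_a$ is Killing for $f_*g$, because $\Lie_{f_*\xi}f_*g=f_*\Lie_\xi g=0$, and it has the same structure constants. Second,
\begin{equation*}
[f_*\xi_a,f_*X_i]=f_*[\xi_a,X_i]=0 ,
\end{equation*}
and at the image point $f(p)$ we have $f_*X_i=M_i{}^a f_*\xi_a$. The uniqueness statement of Theorem~\ref{thm:invframe:existence} then forces the reconstruction for the transformed data to be $f_*X_i$. Since $f_*$ preserves duality, as in the proof of Proposition~\ref{prop:diffeostable}, the coframe is $f_*e^i$. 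Combined with Proposition~\ref{prop:diffeostable}, the separated form is transported as well.

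\emph{$\GL(3)$ equivariance.} Because $A$ is constant, $[\xi'_a,X_i]=A_a{}^b[\xi_b,X_i]=0$, so the original invariant frame is still invariant with respect to the new basis. Any constant $\GL(3)$ recombination of an invariant frame is again an invariant frame. The choice $X'_i=(A^{-1})^b{}_iX_b$ is therefore admissible. It is the frame singled out by uniqueness once the normalisation is transformed accordingly, i.e.\ the new normalisation matrix $M'$ is determined by $M$ and $A$. The only subtlety here is to state precisely which normalisation makes this choice the unique solution, which is why the statement says the frame "may be chosen" this way.
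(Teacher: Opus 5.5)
Your proposal is correct. Items 2 and 3 follow the paper's proof essentially verbatim. For item 2 this means naturality of the bracket plus uniqueness under the pushed-forward normalisation, and for item 3 the computation $[\xi'_a,X'_i]=A_a{}^b(A^{-1})^c{}_i[\xi_b,X_c]=0$. You also add a few checks the paper leaves tacit: that $f_*\xi_a$ is Killing for $f_*g$ with the same structure constants, and how the normalisation matrix transforms under $A$.

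For item 1 your route differs. The paper substitutes the reconstruction equation into the bracket expansion at every point, obtains $[X_i,X_j]=-X_i{}^aX_j{}^bC^c{}_{ab}(X^{-1})_c{}^kX_k$, and reads off the opposite algebra in the normalised frame. You instead first use the Jacobi identity to show that $[X_i,X_j]$ is again invariant. Its frame components are then annihilated by every $\xi_a$ and so are constant on the hypersurface, and only after that do you evaluate the expansion at $p$.

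Your version supplies something the paper leaves implicit. The paper's pointwise formula has $x$-dependent coefficients. Showing that they are in fact constant requires exactly your Jacobi argument, or equivalently the $\mathrm{Ad}$-invariance of $C^c{}_{ab}$ under the matrices $X_i{}^a(x)$. That constancy is what ``closes on a Lie algebra'' means, and it is what lets the identity at $p$ hold everywhere.

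The paper's computation, in turn, gives an explicit formula for the structure functions in an arbitrary normalisation $M$. This makes the remark that changing the normalisation only conjugates the structure constants immediate. In your write-up that step is only implicit: it follows from uniqueness together with your observation in item 3 that constant $\GL(3)$ recombinations of an invariant frame are again invariant.

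Your sign bookkeeping is right. It also checks against the Bianchi II example, where $[\xi_2,\xi_3]=\xi_1$ while $[X_2,X_3]=-X_1$ with $X_i(0)=\xi_i(0)$.
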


\begin{proof}
We first prove type preservation. Using $X_i=X_i{}^a\xi_a$,
\begin{equation}
\label{eq:type_preservation}
[X_i,X_j]
=
X_i{}^aX_j{}^b[\xi_a,\xi_b]
+
X_i{}^a\xi_a(X_j{}^c)\xi_c
-
X_j{}^a\xi_a(X_i{}^c)\xi_c .
\end{equation}
Substitution of the reconstruction equation
\eqref{eq:reconstruct:linearPDE} gives
\begin{equation}
\label{eq:type_preservation_simplified}
[X_i,X_j]
=
-\left(X_i{}^aX_j{}^bC^c{}_{ab}(X^{-1})_c{}^k\right)X_k .
\end{equation}
Thus the reconstructed frame closes on the opposite algebra, up to the chosen
normalisation of the frame. Since the opposite of a Lie algebra is canonically
isomorphic to the original algebra, and since a different normalisation only conjugates
the structure constants, the Bianchi type is preserved
\cite{Jantzen1979Bianchi,RyanShepley1975}.

For diffeomorphism covariance, let $f\in\Diff(\mathcal M)$. Then
\begin{equation}
[f_*\xi_a,f_*X_i]=f_*[\xi_a,X_i]=0 .
\end{equation}
Thus $f_*X_i$ satisfies the reconstruction equation for the pushed-forward Killing
basis $f_*\xi_a$. Since the reference normalisation is also pushed forward, uniqueness
from Theorem~\ref{thm:invframe:existence} implies that the reconstructed frame in the
transformed gauge is precisely $f_*X_i$, with dual coframe $f_*e^i$.

Finally, if $\xi'_a=A_a{}^b\xi_b$ with constant $A\in\GL(3)$, then choosing
\begin{equation}
X'_i=(A^{-1})^b{}_{i}X_b
\end{equation}
gives
\begin{equation}
[\xi'_a,X'_i]
=
A_a{}^b(A^{-1})^c{}_{i}[\xi_b,X_c]
=
0 .
\end{equation}
This proves $\GL(3)$-equivariance.
\end{proof}

\subsection{Physical content of the reconstruction}

The proof has three important consequences for the framework.

First, the reconstruction is a \emph{factorisation}, not a gauge fixing. A gauge fixing
would restrict the allowed coordinate transformations. Here no such restriction is
imposed. If $f\in\Diff(\mathcal M)$, then
\begin{equation}
[f_*\xi_a,f_*X_i]
=
f_*[\xi_a,X_i]
=
0 .
\end{equation}
Thus the pushed-forward frame $f_*X_i$ is invariant with respect to the
pushed-forward Killing basis $f_*\xi_a$. With the reference normalisation also
pushed forward to $f(p)$, uniqueness implies that the reconstruction in the
transformed gauge gives precisely $f_*X_i$ and its dual coframe. This is the precise
sense in which the framework is gauge invariant: it reconstructs the separated
coframe in whatever gauge the metric is presented, rather than restricting one to a
canonical coordinate parametrisation
\cite{Lee2010,Jantzen1979Bianchi,Christodoulakis2001JMP,CoussaertHenneaux1993CQG}.
This is why the framework applies directly to the privileged-to-synchronous
transformations discussed in the introduction
\cite{Osetrin2006_stackel_separation,OsetrinOsetrinOsetrina2022,
ShapovalovBianchiIII_2306.10605}.

Second, the Bianchi type is preserved. The reconstructed frame does not introduce a
new spatial geometry. Its commutator algebra is isomorphic to that of the original
Killing algebra, up to the conventional sign associated with using the frame that
commutes with the chosen Killing action. The spatial dependence has simply been
moved to the coframe and its structure coefficients, where it belongs in a separated
description.

Third, the dynamical variables are the metric coefficients
\begin{equation}
\gamma_{\mu\nu}(t)=g(X_\mu,X_\nu).
\end{equation}
Because these coefficients are constant on each homogeneous hypersurface, they depend
only on the foliation time. This is the mathematical reason for the computational
advantage highlighted in the introduction: after reconstruction, the homogeneous sector
is again described by time-dependent quantities, even if the coordinate components of
the same metric look spatially dependent in a non-adapted gauge.

\section{Examples}\label{sec:examples}

The introduction explained the problem with the standard approach. Classical Bianchi
methods are most effective when the coordinates are already adapted to the homogeneous
spatial symmetry. In that setting the metric is separated from the start, and the
geometry reduces to time-dependent quantities. In applications, however, especially in
gravitational-wave models, one often changes to observer-adapted or synchronous
coordinates. These coordinates are physically useful, but they can hide the separated
Bianchi structure: the metric components acquire explicit spatial dependence even
though the spacetime remains spatially homogeneous.

The examples below show two complementary aspects of the reconstruction. The Bianchi II model isolates the mechanism in a controlled setting: a coordinate expression with explicit coordinate mixing (i.e.\ featuring off-diagonal terms) is transformed, not by changing gauge, but by replacing the coordinate basis with the invariant coframe reconstructed from the Killing algebra. This makes the hidden separated form manifest and directly exposes the computational gain, since the connection and curvature are expressed in terms of time-dependent metric coefficients and algebraic frame structure rather than spatially dependent coordinate components.

The Bianchi VII example then shows that the same idea is not merely a simplification of an artificial mixed metric. It applies in the gravitational-wave setting that motivates the paper, where one passes from privileged integrability-adapted coordinates to synchronous coordinates adapted to freely falling observers. In that setting the coordinate components of the metric become manifestly non-separated, but this spatial dependence is representational rather than physical: it reflects the use of a non-invariant coordinate basis, not a loss of spatial homogeneity. The reconstructed coframe therefore separates two roles that are usually entangled: synchronous coordinates remain available for physical interpretation, while the invariant frame restores the ODE-level Bianchi structure for computation.

\subsection{Bianchi II example: restoring separated form and its computational advantage}
\subsubsection{The metric and its Killing algebra}
In order to demonstrate the utility of the separated formalism, we consider the following metric that, with the implicit coordinate choice, is evidently not in separated form:
\begin{equation}\label{eq:bianchiII_metric}
    \begin{split}
        \dd{s}^2&=-\dd{t}^2+a(t)^2\dd{x}^2+(z^2a(t)^2+b(t)^2)\dd{y}^2+c(t)^2\dd{z}^2\\
        &\qquad-2za(t)^2\dd{x}\dd{y}.
    \end{split}
\end{equation}
In order to utilize our methodology, we first need this metric's KVFs. It can be verified that KVFs $\tensor{\xi}{_a}$, $a=1,2,3$, are given by
\begin{equation}
    \tensor{\xi}{_1}=\partial_x,\quad\tensor{\xi}{_2}=\partial_y,\qand\tensor{\xi}{_3}=\partial_z+y\partial_x.
\end{equation}
These KVFs indeed form a Lie algebra, whose structure constants in this basis are $\tensor{C}{^1_2_3}=-\tensor{C}{^1_3_2}=1$, with the others vanishing---hence, this is a Bianchi II Lie algebra.

\subsubsection{Invariant frame reconstruction}
We then utilize the method from \cite{Scholtens2024Bianchi} in order to find the corresponding invariant frame $\tensor{X}{_i}$, $i=1,2,3$:
\begin{equation}\label{eq:bianchiII_Xbasis}
    \tensor{X}{_1}=\partial_x,\quad\tensor{X}{_2}=\partial_y+z\partial_x,\qand\tensor{X}{_3}=\partial_z.
\end{equation}
It can be verified that, indeed, $[\tensor{\xi}{_a},\tensor{X}{_i}]=0$ for all choices of $a$, $i$, and additionally that $[\tensor{X}{_2},\tensor{X}{_3}]=-\tensor{X}{_1}$ with other brackets vanishing, as was to be expected. It is then a small matter to find the dual of the $X$-basis:
\begin{equation}
    \tensor{e}{^1}=\dd{x}-z\dd{y},\quad\tensor{e}{^2}=\dd{y},\qand\tensor{e}{^3}=\dd{z},
\end{equation}
which may be done, for instance, by matrix inversion in the $(x,y,z)$-basis. Substituting this basis into the original metric \eqref{eq:bianchiII_metric}, and additionally setting $\tensor{e}{^0}=\dd{t}\iff\tensor{X}{_0}=\partial_t$, we find the expression
\begin{equation}\label{eq:bianchiII_metric_v2}
    \dd{s}^2=-(\tensor{e}{^0})^2+a(t)^2(\tensor{e}{^1})^2+b(t)^2(\tensor{e}{^2})^2+c(t)^2(\tensor{e}{^3})^2.
\end{equation}
Note that now the metric is in separated form, as was the aim of the construction in this paper: it can be verified straightforwardly that $\mathcal{L}_{X_0}\tensor{e}{^\mu}=0$ for all $\mu=0,1,2,3$. Coincidentally, the off-diagonal element of \eqref{eq:bianchiII_metric} has also been removed.

\begin{table}
    \centering
    \renewcommand{\arraystretch}{1.5}
    \caption{Christoffel symbols in the coordinate basis (regular digits) and the basis from the separated form (hatted digits), based on the metrics of \eqref{eq:bianchiII_metrics}. The entries in \textcolor{red}{red} indicate that in the separated form, the symmetry in the Christoffel symbols' lower indices is generally broken.}
    \begin{tabularx}{\textwidth}{lllll}
        \toprule
        \multirow{13}{*}{\emph{Coordinate basis}} & \multicolumn{2}{l}{$\cs{0}{1}{1}=a\,\partial_ta$} & \multicolumn{2}{l}{$\cs{3}{0}{3}=\frac{\partial_tc}{c}$} \\
        & \multicolumn{2}{l}{$\cs{0}{1}{2}=-za\,\partial_ta$} & \multicolumn{2}{l}{$\cs{3}{3}{0}=\frac{\partial_t c}{c}$} \\
        & \multicolumn{2}{l}{$\cs{0}{2}{1}=-za\,\partial_ta$} & \multicolumn{2}{l}{$\cs{3}{1}{2}=\frac{a^2}{2c^2}$} \\
        & \multicolumn{2}{l}{$\cs{0}{2}{2}=z^2a\,\partial_ta+b\,\partial_tb$} & \multicolumn{2}{l}{$\cs{3}{2}{1}=\frac{a^2}{2c^2}$} \\
        & \multicolumn{2}{l}{$\cs{0}{3}{3}=c\,\partial_tc$} & \multicolumn{2}{l}{$\cs{3}{2}{2}=-z\frac{a^2}{c^2}$} \\
        & \multicolumn{2}{l}{$\cs{1}{0}{1}=\frac{\partial_ta}{a}$} & \multicolumn{2}{l}{$\cs{2}{0}{2}=\frac{\partial_tb}{b}$} \\
        & \multicolumn{2}{l}{$\cs{1}{1}{0}=\frac{\partial_ta}{a}$} & \multicolumn{2}{l}{$\cs{2}{2}{0}=\frac{\partial_tb}{b}$} \\
        & \multicolumn{2}{l}{$\cs{1}{0}{2}=z(\frac{\partial_tb}{b}-\frac{\partial_ta}{a})$} & \multicolumn{2}{l}{$\cs{2}{1}{3}=-\frac{a^2}{2b^2}$} \\
        & \multicolumn{2}{l}{$\cs{1}{2}{0}=z(\frac{\partial_tb}{b}-\frac{\partial_ta}{a})$} & \multicolumn{2}{l}{$\cs{2}{3}{1}=-\frac{a^2}{2b^2}$} \\
        & \multicolumn{2}{l}{$\cs{1}{2}{3}=\frac{1}{2}$} & \multicolumn{2}{l}{$\cs{2}{2}{3}=z\frac{a^2}{2b^2}$} \\
        & \multicolumn{2}{l}{$\cs{1}{3}{2}=\frac{1}{2}$} & \multicolumn{2}{l}{$\cs{2}{3}{2}=z\frac{a^2}{2b^2}$} \\
        & \multicolumn{4}{l}{$\cs{1}{1}{3}=-z\frac{a^2}{b^2}$} \\
        & \multicolumn{4}{l}{$\cs{1}{3}{1}=-z\frac{a^2}{b^2}$} \\
        \midrule
        \multirow{4}{*}{\emph{Separated form}} & $\cshat{0}{1}{1}=a\,\partial_ta$ & $\cshat{1}{0}{1}=\frac{\partial_ta}{a}$ & $\cshat{2}{0}{2}=\frac{\partial_tb}{b}$ & $\cshat{3}{0}{3}=\frac{\partial_tc}{c}$ \\
        & $\cshat{0}{2}{2}=b\,\partial_tb$ & $\cshat{1}{1}{0}=\frac{\partial_ta}{a}$ & $\cshat{2}{2}{0}=\frac{\partial_tb}{b}$ & $\cshat{3}{3}{0}=\frac{\partial_tc}{c}$ \\
        & $\cshat{0}{3}{3}=c\,\partial_tc$ & \textcolor{red}{$\cshat{1}{3}{2}=\tfrac{1}{2}$} & $\cshat{2}{1}{3}=\frac{a^2}{2b^2}$ & $\cshat{3}{1}{2}=-\frac{a^2}{2b^2}$\\
        & & \textcolor{red}{$\cshat{1}{2}{3}=-\tfrac{1}{2}$} & $\cshat{2}{3}{1}=\frac{a^2}{2b^2}$ & $\cshat{3}{2}{1}=-\frac{a^2}{2b^2}$ \\ \bottomrule
    \end{tabularx}
    \label{tab:christoffels}
    \renewcommand{\arraystretch}{1}
\end{table}

\begin{table}
    \centering
    \renewcommand{\arraystretch}{1.5}
    \caption{Components of the Ricci tensor in the coordinate basis (regular digits) and the basis from the separated form (hatted digits), based on the metrics of \eqref{eq:bianchiII_metrics}.}
    \begin{tabularx}{\textwidth}{ll}
         \toprule
         \multirow{7}{*}{\emph{Coordinate basis}} & $\ric{0}{0}=2\!\left(\frac{\partial_ta}{a}\right)^{\!2}-\frac{\partial_t^2a}{a}+2\!\left(\frac{\partial_tb}{b}\right)^{\!2}-\frac{\partial_t^2b}{b}+2\!\left(\frac{\partial_tc}{c}\right)^{\!2}-\frac{\partial_t^2c}{c}$ \\
         & $\ric{1}{1}=a^2\left[\frac{\partial_t^2a}{a}+2\!\left(\frac{\partial_ta}{a}\right)^{\!2}-\frac{\partial_ta}{a}\frac{\partial_tb}{b}-\frac{\partial_ta}{a}\frac{\partial_tc}{c}-\frac{a^2}{2b^2c^2}\right]$\\
         & $\ric{1}{2}=-za^2\left[\frac{\partial_t^2a}{a}+2\!\left(\frac{\partial_ta}{a}\right)^{\!2}-\frac{\partial_ta}{a}\frac{\partial_tb}{b}-\frac{\partial_ta}{a}\frac{\partial_tc}{c}-\frac{a^2}{2b^2c^2}\right]$ \\
         & $\ric{2}{1}=-za^2\left[\frac{\partial_t^2a}{a}+2\!\left(\frac{\partial_ta}{a}\right)^{\!2}-\frac{\partial_ta}{a}\frac{\partial_tb}{b}-\frac{\partial_ta}{a}\frac{\partial_tc}{c}-\frac{a^2}{2b^2c^2}\right]$ \\
         & $\ric{2}{2}=z^2a^2\left[\frac{\partial_t^2a}{a}+2\!\left(\frac{\partial_ta}{a}\right)^{\!2}-\frac{\partial_ta}{a}\frac{\partial_tb}{b}-\frac{\partial_ta}{a}\frac{\partial_tc}{c}-\frac{a^2}{2b^2c^2}\right]$ \\
         & $\qquad\quad+b^2\left[\frac{\partial_t^2b}{b}+2\!\left(\frac{\partial_tb}{b}\right)^{\!2}-\frac{\partial_tb}{b}\frac{\partial_ta}{a}-\frac{\partial_tb}{b}\frac{\partial_tc}{c}-\frac{3a^2}{2b^2c^2}\right]$ \\
         & $\ric{3}{3}=c^2\left[\frac{\partial_t^2c}{c}+2\!\left(\frac{\partial_tc}{c}\right)^{\!2}-\frac{\partial_tc}{c}\frac{\partial_ta}{a}-\frac{\partial_tc}{c}\frac{\partial_tb}{b}+\frac{a^2}{2b^2c^2}\right]$ \\
         \midrule
         \multirow{5}{*}{\emph{Separated form}} & $\richat{0}{0}=2\!\left(\frac{\partial_ta}{a}\right)^{\!2}-\frac{\partial_t^2a}{a}+2\!\left(\frac{\partial_tb}{b}\right)^{\!2}-\frac{\partial_t^2b}{b}+2\!\left(\frac{\partial_tc}{c}\right)^{\!2}-\frac{\partial_t^2c}{c}$ \\
         & $\richat{1}{1}=a^2\left[\frac{\partial_t^2a}{a}+2\!\left(\frac{\partial_ta}{a}\right)^{\!2}-\frac{\partial_ta}{a}\frac{\partial_tb}{b}-\frac{\partial_ta}{a}\frac{\partial_tc}{c}-\frac{a^2}{2b^2c^2}\right]$ \\
         & $\richat{2}{2}=b^2\left[\frac{\partial_t^2b}{b}+2\!\left(\frac{\partial_tb}{b}\right)^{\!2}-\frac{\partial_tb}{b}\frac{\partial_ta}{a}-\frac{\partial_tb}{b}\frac{\partial_tc}{c}\right]$ \\
         & $\richat{3}{3}=c^2\left[\frac{\partial_t^2c}{c}+2\!\left(\frac{\partial_tc}{c}\right)^{\!2}-\frac{\partial_tc}{c}\frac{\partial_ta}{a}-\frac{\partial_tc}{c}\frac{\partial_tb}{b}\right]$ \\
         \bottomrule
    \end{tabularx}
    \label{tab:riccis}
\end{table}

\subsubsection{Christoffel symbols and Ricci tensor}
Now we can demonstrate the utility of passing to separated form for calculating geometric quantities such as the Christoffel symbols. For clarity, we denote with regular digits the coordinate basis, and with hatted digits the separated basis; that is,
\begin{equation}
    \{0,1,2,3\}\leftrightarrow\{\partial_t,\partial_x,\partial_y,\partial_z\}\qand\{\hat{0},\hat{1},\hat{2},\hat{3}\}\leftrightarrow\{\tensor{X}{_0},\tensor{X}{_1},\tensor{X}{_2},\tensor{X}{_3}\},
\end{equation}
with the $X$-basis as defined by \eqref{eq:bianchiII_Xbasis} in terms of the coordinates (and $\tensor{X}{_0}=\partial_t$). The metric in either basis is
\begin{equation}\label{eq:bianchiII_metrics}
    \tensor{g}{_\mu_\nu}=\pmqty{-1 & 0 & 0 & 0 \\ 0 & a^2 & -za^2 & 0 \\ 0 & -za^2 & z^2a^2+b^2 & 0 \\ 0 & 0 & 0 & c^2} \qand \tensor{g}{_{\hat{\mu}}_{\hat{\nu}}}=\pmqty{\dmat[0]{-1, a^2, b^2, c^2}},
\end{equation}
which are \eqref{eq:bianchiII_metric} and \eqref{eq:bianchiII_metric_v2} above, respectively. Utilizing the Koszul formula \cite[Cor.~5.10]{Lee2010}, we find the Christoffel symbols as indicated in Table \ref{tab:christoffels}. In the coordinate basis, there are 24 non-zero Christoffel symbols, whereas for the separated form there are 15; additionally, in the coordinate basis ten Christoffel symbols depend on both $t$ and $z$, whereas this does not occur in the separated form (by construction). In the separated form, however, we do encounter the generality that symmetry in the lower indices of the Christoffel symbols is broken, and this is also indicated in Table \ref{tab:christoffels}.\footnote{Despite this broken symmetry, the connection remains torsion-free due to the general non-commutativity of the basis vector field.} We also give the entries of the Ricci tensors for both cases in Table \ref{tab:riccis}. The major advantages of separated form are now apparent: there are generally fewer Christoffel symbols and Ricci tensor elements which to take into account, and they depend solely on cosmic time instead of space and time.

\subsection{Bianchi VII example: a gravitational-wave spacetime}
\label{sec:examples_BVII}

The Bianchi II model shows the mechanism in a controlled example. We now turn to the
case that motivates the framework physically. In the gravitational-wave literature one
often starts from privileged coordinates, because they make exact integrability
properties visible, and then transforms to synchronous coordinates, because those are
adapted to freely falling observers. The problem, as described in the introduction,
is that the synchronous coordinates are useful physically but poor for displaying the
Bianchi separation.

\subsubsection{Privileged presentation}

Osetrin, Osetrin, and Osetrina \cite{OsetrinOsetrinOsetrina2022} construct a
Shapovalov-type gravitational-wave spacetime whose spatial homogeneity group is
Bianchi type VII \cite{TerzisChristodoulakis2009GRG}. In the model a cosmic time function has not been found, so we denote the zeroth coordinate component by $x^0$ isntead of $t$. In privileged coordinates $x^\alpha=(x^0,x^1,x^2,x^3)$, the metric depends only on the wave variable $x^0$:
\begin{align}\label{eq:BVII_metric12}
    ds^2 &= 2\,dx^0\,dx^1
    -\frac{(x^0)^{2\omega}}{\gamma(\delta^2-1)}
    \Big[
    \big(1-\delta\cos\theta-2\log x^0\big)\,(dx^2)^2
    \nonumber\\
    &\hspace{3.8cm}
    -2\big(\sin\theta+\log x^0\big)\,dx^2\,dx^3
    \nonumber\\
    &\hspace{3.8cm}
    +\big(1+\delta\cos\theta-2\log x^0\big)\,(dx^3)^2
    \Big],
\end{align}
with constant parameters $\omega,\gamma,\delta,\theta$. In these coordinates the
integrability structure is clear: the coefficients depend only on the wave variable.

The spatial symmetry is encoded by the Bianchi VII Killing algebra
\begin{equation}\label{eq:BVII_commutators}
[X_{(1)},X_{(2)}]=0,
\quad
[X_{(1)},X_{(3)}]=\omega X_{(1)}+X_{(2)},
\quad
[X_{(2)},X_{(3)}]=-X_{(1)}+\omega X_{(2)} .
\end{equation}
For our method, this algebra is the essential input. We do not need to preserve the
privileged coordinate chart itself.

\subsubsection{Synchronous presentation}

To extract observables for a freely falling observer,
\cite{OsetrinOsetrinOsetrina2022} transform to synchronous coordinates
$\tilde x^\alpha=(\tau,\tilde x^1,\tilde x^2,\tilde x^3)$. The transformation is
defined by $x^0=\tilde x^1\tau$, together with further algebraic transformations of
$x^1,x^2,x^3$. The metric then takes the synchronous form
\begin{equation}\label{eq:BVII_sync54}
ds^2=d\tau^2+\tilde g_{ij}(\tau,\tilde x^k)\,
d\tilde x^i\,d\tilde x^j,
\qquad
\tilde g_{00}=1,
\quad
\tilde g_{0i}=0 .
\end{equation}
This gauge is physically natural, but it hides the simple Bianchi structure. The spatial
metric coefficients now depend explicitly on the spatial coordinates. For example,
\cite{OsetrinOsetrinOsetrina2022} find
$\tilde g_{1k}(\tau,\tilde x)=-\tilde x^1\tilde x^k/\tau^2$, while other coefficients
depend on $(\tau\tilde x^1)^{2\omega}$. 

This is exactly the problem our framework addresses. The spacetime is still
spatially homogeneous, but the synchronous coordinate basis is not the right basis for
seeing the separated form. A coordinate-tied Bianchi construction would now have to
work hard to recover the canonical structure. Our construction avoids that step.

Furthermore, we stress here that our framework does provide a cosmic time $\tau$, whose existence has not been reported elsewhere for this spacetime model.

\subsubsection{Reconstructed separated form}

Let $\{\xi_a\}_{a=1}^3$ be the spatial Killing basis in the privileged gauge, with
$\xi_a=X_{(a)}$ in \eqref{eq:BVII_commutators}, and let $\varphi$ be the
privileged-to-synchronous diffeomorphism. The pushed-forward fields
\begin{equation}
\tilde \xi_a:=\varphi_*\xi_a
\end{equation}
remain Killing fields and satisfy the same Bianchi VII algebra. They therefore carry
the homogeneous structure into the synchronous gauge.

We now apply the reconstruction theorem directly in the synchronous gauge. With the foliation-adapted time function $\tilde t=\tau$ in the synchronous gauge, we solve
\begin{equation}\label{eq:BVII_reconstruct}
[\tilde \xi_a,\tilde X_i]=0
\iff
\tilde \xi_a(\tilde X_i^{\ b})
=
-C^{b}{}_{ac}\,\tilde X_i^{\ c},
\end{equation}
where the constants $C^{b}{}_{ac}$ are determined by
\eqref{eq:BVII_commutators}. Dualising the resulting frame gives the invariant coframe
$\{\tilde e^i\}$, and with $\tilde e^0=\dd\tilde t$ the metric becomes
\begin{equation}\label{eq:BVII_separated_claim}
g
=
\gamma_{\mu\nu}(\tilde t)\,
\tilde e^\mu \tilde e^\nu,
\qquad
\gamma_{\mu\nu}(\tilde t)
=
g(\tilde X_\mu,\tilde X_\nu).
\end{equation}

This is the advantage advertised in the introduction. We keep the synchronous gauge,
which is the useful gauge for physical interpretation, but we do not perform
computations in its misleading coordinate basis. Instead, the reconstructed coframe
recovers the separated Bianchi form inside the same gauge.

\subsubsection{Physical implication}

The physical implication is direct. The spatial dependence of
$\tilde g_{ij}(\tau,\tilde x)$ in synchronous coordinates is not a new inhomogeneous
degree of freedom. It is a result of expressing a homogeneous spacetime in a basis that
does not respect the Bianchi foliation. The homogeneous dynamics still lives in the
time-dependent matrix $\gamma_{\mu\nu}(\tilde t)$.

Therefore,
\begin{equation}
\tilde X_i\big(\gamma_{\mu\nu}(\tilde t)\big)=0 .
\end{equation}
Curvature components, tidal tensors, and geodesic-deviation quantities can be computed
from $\gamma_{\mu\nu}(\tilde t)$, its time derivatives, and the Bianchi VII structure
coefficients. This is the practical payoff of the framework: the observer-adapted
synchronous gauge remains available for interpreting measurements, while the
homogeneous sector retains the ODE-level structure of canonical Bianchi calculations
\cite{Jantzen1979Bianchi,RyanShepley1975}.

\section{Conclusion}\label{sec:conclusion}

The main point of this paper is the following. A Bianchi spacetime does not lose its
homogeneous structure when it is written in inconvenient coordinates. What can be
lost is only the visibility of that structure. Classical Bianchi methods are most
transparent when the coordinates are adapted to the spatial symmetry. In that case the
metric is separated from the start, the spatial dependence is carried by the invariant
spatial frame or coframe, and the dynamics reduces to time-dependent quantities. In many applications,
however, especially in gravitational-wave models, one changes to coordinates adapted
to observers or measurements. Synchronous coordinates are physically natural for this
purpose, but they can make the metric components look spatially dependent even when
the spacetime remains spatially homogeneous.

We have shown how to recover the separated structure without undoing the physical
gauge choice. The construction uses only the spatial Killing algebra. Given a Killing
basis ${\xi_a}$, we reconstruct the invariant spatial frame by solving
\begin{equation}
[\xi_a,X_i]=0 .
\end{equation}
The dual coframe ${e^i}$ then carries the spatial Bianchi structure, and the metric
takes the separated form
\begin{equation}
g=\gamma_{\mu\nu}(t),e^\mu e^\nu .
\end{equation}
Thus the separated frame is recovered from the symmetry itself, not from a preferred
coordinate chart.

This is the sense in which the framework is gauge invariant. We are not imposing a
new gauge condition, and we are not restricting the allowed coordinate transformations.
Instead, after a diffeomorphism, the Killing fields are pushed forward and the same
reconstruction gives the corresponding invariant coframe in the transformed gauge.
The method therefore agrees with the usual canonical Bianchi constructions when the
coordinates are already adapted to the symmetry, but it continues to work when they
are not \cite{Jantzen1979Bianchi,Christodoulakis2001JMP,CoussaertHenneaux1993CQG,Lee2010}.

The examples show why this matters. The Bianchi II example isolates the mechanism in
a simple setting: a spatially dependent off-diagonal coordinate term makes the
coordinate expression look more complicated, but it does not destroy separated form.
In the reconstructed coframe, the metric coefficients again depend only on time, and
the spatial geometry is encoded by the Bianchi structure coefficients. The Bianchi VII
example shows the same point in the gravitational-wave setting that motivated the
paper. The privileged coordinates are useful because they display the integrability
structure. The synchronous coordinates are useful because they are adapted to freely
falling observers. But the synchronous coordinate components hide the Bianchi
separation. Our reconstruction keeps the synchronous gauge and restores the separated
Bianchi form inside it
\cite{Osetrin2006_stackel_separation,OsetrinOsetrinOsetrina2022,ShapovalovBianchiIII_2306.10605}.

The practical consequence is that one can separate physical interpretation from
computational representation. The observer-adapted gauge can be used to describe
measurements, while the reconstructed invariant frame can be used to compute the
homogeneous geometry. In that frame,
\begin{equation}
X_i\big(\gamma_{\mu\nu}(t)\big)=0 .
\end{equation}
Thus the connection, curvature, ADM quantities, tidal tensors, and geodesic-deviation
equations are built from $\gamma_{\mu\nu}(t)$, its time derivatives, and the Bianchi
structure coefficients. The homogeneous sector is therefore returned to ODE-level
complexity, even if the coordinate components of the same metric look spatially
dependent.

The price paid is interpretational rather than geometric. The reconstructed frame is
generally point-dependent and non-coordinate, so individual frame components need not
have the same immediate coordinate or observer interpretation as components in the
original chart. The method should therefore not be viewed as replacing physically
motivated coordinates, but as complementing them: the chosen gauge remains available
for interpretation, while the reconstructed frame provides the efficient representation
for geometric computation.

The final lesson is that the Killing algebra still knows the separated frame. Gauge
transformations may hide this frame in the coordinate expression of the metric, but
they do not remove it. Reconstructing the frame directly from the Killing algebra gives
a way to work in physically meaningful gauges while retaining the computational
simplicity of separated Bianchi form. This makes the framework useful for
gravitational-wave Bianchi models and for related anisotropic perturbative settings
\cite{GuhaDatta2020IJMPD,BianchiI_GW_2205.05733,LudwickWilliams2025PhLB}.

\subsubsection*{Acknowledgements}
The authors are grateful to Marcello Seri for extensive reviews and helpful comments on drafts of this manuscript. DV gratefully acknowledges Rien van de Weygaert for his guidance and support, and for facilitating the academic context in which this work was carried out.

\subsubsection*{Author contributions}
Both authors contributed equally to the conceptualization, writing, editing, and review of this work.

\subsubsection*{Funding}
The authors received no financial support for the research, authorship, and/or publication of this article.

\subsubsection*{Data availability}
No datasets were generated or analysed during the current study.

\subsubsection*{Code availability}
No code was used in this research.

\section*{Declarations}
\subsubsection*{Conflict of interest}
The authors declare that they have no conflict of interest.

\subsubsection*{Consent for publication}
All authors have reviewed the manuscript and approved its submission for publication.

\subsubsection*{Ethics approval and consent to participate}
Not applicable.

\subsubsection*{Materials availability}
Not applicable.

\bibliographystyle{spmpsci}
\bibliography{ref}

\end{document}